
\documentclass[letterpaper, 10 pt, conference]{ieeeconf}  

\IEEEoverridecommandlockouts                              
\overrideIEEEmargins

\usepackage{mathtools}
\usepackage{amsmath}
\usepackage{amssymb}
\usepackage{mathrsfs}
\usepackage{graphicx}
\usepackage{subfigure}
\usepackage{color}
\usepackage{bm}
\usepackage{array}
\usepackage{algorithm}
\usepackage{algorithmic}

\usepackage{savesym}
\savesymbol{labelindent}
\usepackage{enumitem}
\restoresymbol{enumitem}{labelindent}

\makeatletter
\let\NAT@parse\undefined
\makeatother
\usepackage[hidelinks]{hyperref}

\frenchspacing
\allowdisplaybreaks[4]

\newtheorem{asmp}{Assumption}
\newtheorem{lem}{Lemma}
\newtheorem{prop}{Proposition}
\newtheorem{thm}{Theorem}

\newtheorem{rmk}{Remark}
\newtheorem{defn}{Definition}

\frenchspacing
\allowdisplaybreaks[4]

\newcommand{\RR}{\mathbb{R}}
\newcommand{\EE}{\mathbb{E}}
\newcommand{\PP}{\mathbb{P}}
\newcommand{\NN}{\mathbb{N}}

\newcommand{\II}{\mathbb{I}}
\newcommand{\eps}{\varepsilon}

\newcommand{\mtcc}{\mathcal{C}}

\newcommand{\mtce}{\mathcal{E}}

\newcommand{\mtcg}{\mathcal{G}}

\newcommand{\mtcv}{\mathcal{V}}

\newcommand{\bfb}{\mathbf{b}}

\newcommand{\bff}{\mathbf{f}}

\newcommand{\bfn}{\mathbf{n}}

\newcommand{\bfv}{\mathbf{v}}
\newcommand{\bfw}{\mathbf{w}}
\newcommand{\bfx}{\mathbf{x}}
\newcommand{\bfy}{\mathbf{y}}

\newcommand{\bfF}{\mathbf{F}}

\newcommand{\bfS}{\mathbf{S}}

\newcommand{\bfX}{\mathbf{X}}
\newcommand{\bfY}{\mathbf{Y}}

\newcommand{\bfPhi}{\mathbf{\Phi}}

\newcommand{\Let}{: =}
\newcommand{\teL}{= :}
\newcommand{\sgn}{\textup{sgn}}

\newcommand{\bfl}{\mathbf{1}}
\newcommand{\bfo}{\mathbf{0}}
\newcommand{\tx}{\textup}
\newcommand{\tp}{\tx{T}}


\newcommand{\tsbm}{\tx{SBM}}
\newcommand{\tssbm}{\tx{SSBM}}
\newcommand{\rank}{\operatorname{rank}}
\newcommand{\diag}{\operatorname{diag}}
\newcommand{\range}{\operatorname{range}}




\title{\LARGE \bf
Learning Communities from Equilibria of Nonlinear Opinion Dynamics
}


\author{Yu Xing, Anastasia Bizyaeva, and Karl H. Johansson
\thanks{This work was supported by the Knut and Alice Wallenberg Foundation (Wallenberg Scholar Grant), the Swedish Research Council (Distinguished Professor Grant 2017-01078), and the Swedish Foundation for Strategic Research (SUCCESS FUS21-0026).}
\thanks{YX and KHJ are with Division of Decision and Control Systems, School of Electrical Engineering and Computer Science, KTH Royal Institute of Technology, and with Digital Futures, Stockholm, Sweden. Email:
		{\tt\small \{yuxing2, kallej\}@kth.se}.
  AB is with the Sibley School of Mechanical and Aerospace Engineering, Cornell University, Ithaca, NY, USA. Email: {\tt\small anastasiab@cornell.edu}.}%
}

\begin{document}

\maketitle
\thispagestyle{empty}
\pagestyle{empty}

\begin{abstract}
This paper studies community detection for a nonlinear opinion dynamics model from its equilibria. It is assumed that the underlying network is generated from a stochastic block model with two communities, where agents are assigned with community labels and edges are added independently based on these labels. Agents update their opinions following a nonlinear rule that incorporates saturation effects on interactions. It is shown that clustering based on a single equilibrium can detect most community labels (i.e., achieving almost exact recovery), if the two communities differ in size and link probabilities. When the two communities are identical in size and link probabilities, and the inter-community connections are denser than intra-community ones, the algorithm can achieve almost exact recovery under negative influence weights but fails under positive influence weights. Utilizing fixed point equations and spectral methods, we also propose a detection algorithm based on multiple equilibria, which can detect communities with positive influence weights. Numerical experiments demonstrate the performance of the proposed algorithms.
\end{abstract}

\section{Introduction}

Learning networks from group dynamics has gained significant interest in various disciplines~\cite{ramakrishna2020user,ravazzi2021learning}, due to its wide applications in influence maximization~\cite{kempe2003maximizing} and recommender systems~\cite{burke2002hybrid}. Community detection is one of the central topics among network inference~\cite{fortunato2016community}, as real networks often comprise communities that are sparsely connected to each other. Recently, an increasing amount of research has focused on community detection based on observations from dynamical systems (e.g.,~\cite{wai2019blind,hoffmann2020community,xing2023community}). However, most studies address the detection problem for linear dynamics, and there is still a need to understand how nonlinearity affects detection performance. 
Since nonlinear models can behave differently from linear averaging dynamics~\cite{baumann2020modeling,bizyaeva2022nonlinear}, it is important to investigate how to adapt and apply traditional methods, such as spectral clustering, to complex nonlinear dynamics.

\subsection{Related Work}
Community detection has been extensively investigated for over two decades~\cite{fortunato2016community}.
There are three major approaches to the problem. The most common approach involves optimizing quality functions. The Louvain method~\cite{blondel2008fast} is a widely-used fast algorithm based on greedy optimization of the modularity, which measures the extent a given network partition implying dense connections within subgroups.
The Infomap method~\cite{rosvall2008maps} represents the approach based on dynamics. The method aims to compress random walks over networks, by looking for partitions that reduce information required for describing the trajectories.
The approach based on statistical inference has become well-established in recent years. The methods infer generative network models that presuppose community structure. The stochastic block model (SBM) is a crucial example, where edges exist with probability depending on pre-assigned community labels. This framework facilitates theoretical analysis of community detectability~\cite{yun2014community,massoulie2014community,abbe2017community}. 

When only data from dynamics over a network are available, rather than direct edge information, a standard method is to construct a network based on state similarity and then apply detection algorithms to that network. 
The paper~\cite{hoffmann2020community} introduces a Bayesian hierarchical model for time series, and demonstrates that model-based approaches can outperform traditional methods.
Maximum likelihood methods for cascade dynamics are explored in~\cite{prokhorenkova2022less}, and nonparametric Bayesian methods for epidemics and an Ising model are proposed in~\cite{peixoto2019network}. 
A blind community detection approach is developed in~\cite{wai2019blind,schaub2020blind,wai2022community}. The method applies spectral clustering to sample covariance matrices derived from a single snapshot from multiple trajectories.
The papers~\cite{xing2023community,xing2023almost} propose algorithms leveraging transient and asymptotic behaviors of gossip opinion dynamics.

Most of the above studies focus on community detection for linear dynamics, and the problem for nonlinear dynamics remains unexplored.
In this paper, we study a nonlinear opinion dynamics model~\cite{bizyaeva2022nonlinear}. 
In the model, individuals update their opinions according to a saturation interaction rule, which is also found in neural and biological systems~\cite{eldar2013effects,bogacz2007extending}.
This type of models can capture the transition from opinion consensus to polarization without the presence of external influence, unlike the DeGroot, Friedkin-Johnsen, or Hegselmann-Krause models~\cite{proskurnikov2017tutorial}.
In particular,~\cite{leonard2021nonlinear} uses the nonlinear model to explain political polarization dynamics.
Additionally, ~\cite{baumann2020modeling} proposes a model combining the saturation rule with homophily, and reproduces the echo chamber phenomenon on social media.
Therefore, investigating community detection from these nonlinear dynamics can help understand the structure of real-world dynamics.

\subsection{Contributions}
We study community detection based on equilibria of a nonlinear opinion dynamics. It is assumed that the dynamics evolve over networks generated from an SBM with two communities. For the case with a single equilibrium available, we propose a community detection algorithm based on $k$-means (Algorithm~\ref{alg_1}). The algorithm can detect most community labels with high probability (i.e., achieving almost exact recovery), if a non-consensus equilibrium is used, and the two communities differ in size and link probabilities (Theorem~\ref{thm_main}~(i.a)). 
When the communities are identical in size and link probabilities, and inter-community connections are denser than intra-community ones, the algorithm can achieve almost exact recovery if interpersonal influence weights are negative, but fails if the weights are positive (Theorem~\ref{thm_main}~(i.b) and~(ii)).
For multiple equilibria with external inputs, another detection algorithm (Algorithm~\ref{alg_2}) is developed by leveraging fixed point equations and spectral clustering methods. Its performance is validated by numerical experiments.

By studying a typical nonlinear model~\cite{bizyaeva2021control}, the results demonstrate how community detectability is affected by nonlinearity.
Stationary states under external excitation or transient states are necessary for detection in linear dynamics (e.g., \cite{wai2019blind,schaub2020blind,xing2023almost}). 
In contrast, it is found that community detection is possible by using only equilibria of nonlinear dynamics without excitation. 
However, community structure information may not be preserved in the equilibria when agents are strongly influence by nonlinear interactions. 
These findings provide key insight into the design of community detection methods for real-world complex dynamics.

\subsection{Outline}
Section~\ref{sec_prel} introduces the nonlinear dynamics and the SBM.
Section~\ref{sec_problem} formulates the problem. 
Two detection algorithms are proposed in Section~\ref{sec_alg}, and numerical experiments presented in Section~\ref{sec_simul}.
Section~\ref{sec_conclusion} concludes the paper.~\\

\noindent \textbf{\emph{Notation}.}
%
Denote the set of positive integers by $\NN_+$ and the set of positive real numbers by $\RR_+$. 
%
%
Denote the $n$-dimensional all-one vector by $\bfl_n$. 
$I_n$ is the identity matrix, and $\bfl_{m,n}$ ($\bfo_{m,n}$) is the $m\times n$ all-one (all-zero) matrix.
%
%
Denote the Euclidean norm of a vector and the spectral norm of a matrix by $\|\cdot\|$.
A vector is denoted by a boldface letter, e.g., $\bfx$, and its $i$-th entry by $x_i$. 
For a matrix $A \in \mathbb{R}^{n\times n}$, $a_{ij}$ or $[A]_{ij}$ denotes its $(i,j)$-th entry.
For a symmetric $A \in \RR^{n\times n}$, denote its smallest and largest eigenvalues by $\lambda_{\min}(A)$ and $\lambda_{\max}(A)$. 
For a vector-valued function $\bff(\bfx):\RR^n \to \RR^m$, its Jacobian matrix is an $m\times n$ matrix, denoted by $D_{\bfx} \bff$, whose $(i,j)$-th entry is $[D_{\bfx} \bff]_{ij} = \partial f_i/\partial x_j$. The Jacobian of $\bff$ at a point $\bfx_0 \in \RR^n$ is written as $D_{\bfx} \bff (\bfx_0)$.
For real numbers  $a(n),b(n) > 0$, $n\in\NN$, denote $a(n) = \Theta(b(n))$, if there exist $C_1, C_2>0$ such that $C_1 b(n) \le a(n) \le C_2 b(n)$.
%
The function $\II_{[\textup{property}]}$ is the indicator function, which is one if the property in the bracket holds, and zero otherwise. 
%
%
A sequence of events $\{A_n\}$ happens with high probability (w.h.p.) if $\lim_{n\to\infty} \PP\{A_n\} = 1$.
An undirected graph is denoted by $\mtcg = (\mtcv,\mtce,A)$, where $\mtcv$ is the agent set, $\mtce$ is the edge set, and $A = [a_{ij}]$ is the adjacency matrix such that $a_{ij} = 1$ ($a_{ij} = 0$) if $\{i,j\}\in \mtce$ ($\{i,j\}\not\in\mtce$).

\section{Preliminaries}\label{sec_prel}

In this section, we introduce the nonlinear opinion dynamics and the SBM, and briefly discuss their properties.

\subsection{Nonlinear Opinion Dynamics}
The nonlinear opinion dynamics model takes place over an undirected graph $\mtcg = (\mtcv,\mtce, A)$ with $\mtcv = \{1,\dots,n\}$ and no self-loops ($a_{ii}=0$). 
Each agent $i \in \mtcv$ has a state $x_i(t)$, $t\in \RR_+$, and the model evolves in continuous time according to the  update rule
\begin{align}\label{eq_opinion_model_agentform}
    \dot{x}_i = - d x_i + u S\Big(\alpha x_i + \gamma \sum_{k\in\mtcv} a_{ik} x_k\Big) + b_i, 
\end{align}
where $d > 0$ is the damping coefficient, $u$ is the agent attention parameter to the nonlinear network interaction, and $S$ is an odd saturating function satisfying $S(0)=0$, $S^{\prime}(0)=1$, and $\sgn(S^{\prime\prime}(z)) = - \sgn(z)$. Here we assume $S= \tanh$, i.e., the hyperbolic tangent. 
In the nonlinear term, $\alpha \ge 0$ is the self weight and $\gamma \in \RR$ is the influence weight of other agents. 
Lastly, $b_i \in \RR$ is an additive input, which can be seen as individual prejudice or external influence. 

The compact form of~\eqref{eq_opinion_model_agentform} can be written as
\begin{align}\label{eq_opinion_model_compactform}
    \dot{\bfx} = - d \bfx + u \bfS (\alpha \bfx + \gamma A \bfx ) + \bfb, 
\end{align}
where $\bfS(\bfx) \Let [S(x_1),\dots,S(x_n)]^\tp$ for $\bfx \in \RR^n$.

The model and its extensions have been thoroughly studied, focusing on their bifurcation and steady-state behavior~\cite{bizyaeva2022nonlinear,bizyaeva2021control,bizyaeva2021patterns}. 
Here we consider the case where the parameters are homogeneous (i.e., $d$, $u$, $\alpha$, and $\gamma$ are identical for all agents) and each agent has a single opinion. 
The following result \cite[Theorem~1]{bizyaeva2021patterns} demonstrates that two new equilibria of the model~\eqref{eq_opinion_model_compactform} without inputs emerge, due to bifurcation from the origin as the attention parameter $u$ increases beyond specific thresholds.

\begin{prop}\label{prop_bifurcation}
    Suppose that $\mtcg$ is connected, $u\ge 0$, and $\bfb = \bfo$. 
     \begin{enumerate}[leftmargin=*, itemsep=0pt, parsep=0pt,label=(\roman*)]
        \item If $\gamma > 0$, the origin $\bfx = \bfo$ is a locally exponentially stable equilibrium for $0<u<u_1$ and unstable for $u > u_1$, where $u_1 \Let d/(\alpha + \gamma \lambda_{\max}(A))$. At $u=u_1$, branches of equilibria $\tilde{\bfx} \not= \bfo$ emerge in a steady-state bifurcation off of $\bfx = \bfo$ along the eigenspace corresponding to $\lambda_{\max}(A)$, where the entries of $\tilde{x}$ have the same sign.

        \item If $\gamma<0$, the origin $\bfx = \bfo$ is a locally exponentially stable equilibrium for $0<u<u_2$ and unstable for $u > u_2$, where $u_2 \Let d/(\alpha + \gamma \lambda_{\min}(A))$. At $u=u_2$, branches of equilibria $\tilde{\bfx} \not= \bfo$ emerge in a steady-state bifurcation off of $\bfx = \bfo$ along the eigenspace corresponding to $\lambda_{\min}(A)$, where the entries of $\tilde{\bfx}$ have different signs. \hfill\QED
    \end{enumerate}
 \end{prop}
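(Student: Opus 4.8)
The plan is to study the linearization of~\eqref{eq_opinion_model_compactform} with $\bfb = \bfo$ at the origin and then apply a local bifurcation theorem. First, $\bfx = \bfo$ is an equilibrium for every $u$, since $\bfS(\bfo) = \bfo$. Writing $F(\bfx,u) \Let -d\bfx + u\bfS\big((\alpha I + \gamma A)\bfx\big)$, which is smooth because $S = \tanh$, and using $S'(0) = 1$, the Jacobian at the origin is
\[
    J_0(u) \Let D_{\bfx}F(\bfo,u) = -dI + u(\alpha I + \gamma A).
\]
Since $A$ is symmetric, so is $J_0(u)$; if $\mu_1 \le \cdots \le \mu_n$ are the eigenvalues of $A$, those of $J_0(u)$ are $-d + u(\alpha + \gamma\mu_i)$. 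Thus $\bfo$ is locally exponentially stable when $u\max_i(\alpha + \gamma\mu_i) < d$ and unstable when $u\max_i(\alpha + \gamma\mu_i) > d$.

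Next I would split on the sign of $\gamma$. If $\gamma > 0$ then $\max_i(\alpha + \gamma\mu_i) = \alpha + \gamma\lambda_{\max}(A)$, which is strictly positive because $\alpha \ge 0$, $\gamma > 0$, and $\lambda_{\max}(A) > 0$ for a connected graph with at least one edge; this yields the threshold $u_1 = d/(\alpha + \gamma\lambda_{\max}(A))$ and the stability dichotomy of~(i), the unstable direction at $u > u_1$ being an eigenvector of $\lambda_{\max}(A)$. If $\gamma < 0$, multiplication by $\gamma$ reverses the ordering, so $\max_i(\alpha + \gamma\mu_i) = \alpha + \gamma\lambda_{\min}(A)$, which is again strictly positive since $\lambda_{\min}(A) < 0$ (as $\tx{tr}(A) = 0$); this gives $u_2 = d/(\alpha + \gamma\lambda_{\min}(A))$ and the stability part of~(ii).

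For the bifurcation I would invoke the Crandall--Rabinowitz theorem at $u = u_1$. As $\mtcg$ is connected, $A$ is nonnegative and irreducible, so by Perron--Frobenius $\lambda_{\max}(A)$ is a simple eigenvalue with a strictly positive eigenvector $\bfv_1$, whence $\ker J_0(u_1) = \operatorname{span}\{\bfv_1\}$ is one-dimensional. The transversality condition is that $\partial_u J_0(u_1)\bfv_1 = (\alpha I + \gamma A)\bfv_1 = \big(\alpha + \gamma\lambda_{\max}(A)\big)\bfv_1$ lies outside $\range J_0(u_1)$; since $J_0(u_1)$ is symmetric, $\range J_0(u_1) = \bfv_1^{\perp}$, and a nonzero multiple of $\bfv_1$ is not in $\bfv_1^{\perp}$, so the condition holds. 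Hence there is a curve $s \mapsto (\bfx(s), u(s))$ of equilibria with $\bfx(0) = \bfo$, $u(0) = u_1$, and $\bfx(s) = s\bfv_1 + o(s)$, i.e., branches $\tilde{\bfx} \ne \bfo$ emerge along the eigenspace of $\lambda_{\max}(A)$; as $\mtcv$ is finite and every entry of $\bfv_1$ is strictly positive, every entry of $\bfx(s)$ has the sign of $s$ for small $|s|$, giving~(i). The same argument at $u = u_2$ with an eigenvector $\bfv_2$ of $\lambda_{\min}(A)$ uses $\bfv_2 \perp \bfv_1$ and $\bfv_1 > 0$ to force $\bfv_2$ to have at least one positive and one negative entry, so the bifurcating $\bfx(s) = s\bfv_2 + o(s)$ has entries of different signs, giving~(ii).

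The main obstacle is this last step. While $\lambda_{\max}(A)$ is automatically simple for a connected graph, $\lambda_{\min}(A)$ need not be, so for $\gamma < 0$ one must either restrict to graphs for which $\lambda_{\min}(A)$ is simple (which is generic) or replace Crandall--Rabinowitz by a multi-dimensional, symmetry-adapted reduction. A secondary point I would record is that, since $\bfS$ is odd, $F(-\bfx,u) = -F(\bfx,u)$, so the system is $\ZZ_2$-equivariant and the bifurcation is a pitchfork; because $S'''(0) < 0$ for $S = \tanh$, a Lyapunov--Schmidt reduction shows the pitchfork is supercritical, pinning the nontrivial branches to $u$ just above the threshold. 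This refinement is not needed for the proposition as stated, which asserts only the emergence and tangency of the branches.
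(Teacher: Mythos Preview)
Your argument is correct. The linearization, the stability thresholds, the Perron--Frobenius step guaranteeing simplicity of $\lambda_{\max}(A)$ and strict positivity of $\bfv_1$, the Crandall--Rabinowitz transversality check via $\range J_0(u_1)=\bfv_1^\perp$, and the sign conclusions are all sound. Your caveat about $\lambda_{\min}(A)$ possibly being non-simple is honest and accurate; the statement as written implicitly assumes simplicity (or genericity), and your remark that an equivariant reduction would handle the degenerate case is the right fix.

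As for comparison: the paper does not prove Proposition~\ref{prop_bifurcation} at all---it is quoted from \cite[Theorem~1]{bizyaeva2021patterns} and marked with a terminal \QED. The closest internal argument is Lemma~\ref{lem_LS_reduction} in the appendix, which carries out a Lyapunov--Schmidt reduction (in the $\gamma>0$ case) to obtain the explicit form $\bfx^*=c(u)\bfw^{\max}$ needed later for Theorem~\ref{thm_main}. Your route via Crandall--Rabinowitz is equivalent in spirit---indeed Crandall--Rabinowitz is usually proved through Lyapunov--Schmidt---but it delivers only existence and tangency of the branch, which is exactly what the proposition asserts. The paper's Lyapunov--Schmidt computation buys more: it tracks the higher-order terms well enough to show $\tilde\bfx=O((u-u_1)^3)$ and hence that the bifurcating equilibrium is, to leading order, a scalar multiple of $\bfw^{\max}$, information that Crandall--Rabinowitz alone does not supply. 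Your final paragraph already anticipates this: the $\ZZ_2$-equivariance and $S'''(0)<0$ give the supercritical pitchfork, and a Lyapunov--Schmidt expansion is what makes that quantitative.
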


The real-valued opinion $x_i$ represents the agent $i$'s level of support for two options. 
The sign $\sgn(x_i)$ indicates which of the two options the agent supports, and $x_i = 0$ represents a neutral position. 
Proposition~\ref{prop_bifurcation} shows that agreement steady states with all agents having the same sign emerge from the neutral state as $u$ increases, if the influence weight is positive. 
In contrast, if the influence is negative, disagreement steady states emerge. 
%


\subsection{Stochastic Block Model}
Assume that the agent set $\mtcv$ consists of two disjoint communities $\mtcv_{1}$ and $\mtcv_{2}$. 
Let the community structure vector be $\mtcc \in \{1,2\}^n$, satisfying that $\mtcc_i = 1$ if $i \in \mtcv_{1}$ and $\mtcc_i = 2$ if $i \in \mtcv_{2}$ (i.e., agents in $\mtcv_{1}$ (in $\mtcv_{2}$) have the label~$1$ (label~$2$)). 
The two-community SBM is defined as follows.
\begin{defn}[SBM]\label{def:sbm}
	Let $n\in \NN_+$ be the network size, $\bfn = [n_1~n_2]^\tp\in \NN_+^2$ be the community size vector with $n_1+n_2=n$, and 
    \[\bm{\ell} = \begin{bmatrix}
    \ell_{11} & \ell_{12}\\
    \ell_{21} & \ell_{22}
    \end{bmatrix} \in [0,1]^{2\times 2}
    \]
    be the link probability matrix with $\ell_{12} = \ell_{21}$. 
    In $\tsbm(\bfn,\bm{\ell})$, agents $1,\dots, n_1$ are assigned with community label~$1$ and agents $n_1+1,\dots, n$ with label~$2$. %
    Then the SBM generates an undirected graph $\mtcg = (\mtcv,\mtce,A)$ without self-loops, by independently adding $\{i,j\}$ with $i\not=j$ to $\mtce$ with probability $\ell_{\mtcc_i,\mtcc_j}$. 
    If $n_1=n_2 = n/2$ and $\ell_{11} = \ell_{22}$, the SBM is called symmetric SBM, denoted by $\tssbm(n,\bm{\ell})$. 
    In this case, denote $\ell_{\tx{s}} \Let \ell_{11} = \ell_{22}$ and $\ell_{\tx{d}} \Let \ell_{12} = \ell_{21}$. \hfill\QED
\end{defn}


The following assumption on link probability $\ell_{ij}$ of an SBM is given to ensure the random graph is connected w.h.p. A technical assumption for the SSBM is also introduced.
\begin{asmp}\label{asmp_connect}
    For the $\tsbm(\bfn,\bm{\ell})$, assume that $\ell_{ij} = \omega(\log n/n)$, $\forall i,j\in \{1,2\}$. If the SBM is $\tssbm(n,\bm{\ell})$ and $\ell_{\tx{s}} > \ell_{\tx{d}}$, further assume that $\ell_{\tx{d}} = \omega(\sqrt{\ell_{\tx{s}}\log n})$.
\end{asmp}

To measure the performance of an algorithm detecting communities of the SBM from observations, we introduce the accuracy of an estimate $\hat{\mtcc}$ given by the algorithm,
\begin{align}\label{eq_accdefn}
    \tx{Acc}(\mtcc,\hat{\mtcc}) \Let \frac1n \max\bigg\{\sum_{i=1}^n \II_{[\mtcc_i = \hat{\mtcc}_i]}, \sum_{i=1}^n \II_{[\mtcc_i = 3-\hat{\mtcc}_i]} \bigg\}.
\end{align}
%
%
%

Now we define almost exact recovery of an algorithm detecting communities in SBMs as follows~\cite{yun2014community,abbe2017community}.
\begin{defn}\label{defn_almost_recovery}
    For an SBM with $n$ agents and a community structure $\mtcc$, suppose that a detection algorithm outputs an estimate $\hat{\mtcc}$. The algorithm achieves almost exact recovery, if 
    $\PP \{ \tx{Acc}(\mtcc,\hat{\mtcc}) = 1 - o(1) \} = 1 - o(1)$.\hfill\QED
\end{defn}
Almost exact recovery means that the algorithm can correctly detect most community labels w.h.p.

\section{Problem Formulation}\label{sec_problem}
We investigate how to detect communities from equilibria of the model~\eqref{eq_opinion_model_compactform}. It is assumed that a graph $\mtcg = (\mtcv,\mtce,A)$ is generated from an SBM and then fixed. The opinion model~\eqref{eq_opinion_model_compactform} evolves over this graph and reaches steady state. Since we will consider cases where the network size is large, we set $\gamma = \pm 1/\Delta$, where $\Delta \Let \max_i \EE\{\sum_j a_{ij}\}$ is the maximum expected degree of $\mtcg$. 
Hence, $\gamma A \bfx$ is a weighted average of the opinions $\bfx$.
%
The problem studied in this paper is described as follows.

\textbf{Problem.} Given a single equilibrium $\bfx^*$ or multiple input-equilibrium pairs $\{[\bfb^{(1)}~\bfx^{(1)}], \dots, [\bfb^{(m)}~\bfx^{(m)}]\}$ of the model~\eqref{eq_opinion_model_compactform} over an SBM, design algorithms to detect the communities of the SBM and analyze their performance.


\section{Detection Algorithms and Main Results}\label{sec_alg}

In this section, we first address the community detection problem based on a single equilibrium. The performance of the proposed algorithm is analyzed theoretically for several SBMs (Theorem~\ref{thm_main}). We then design a detection algorithm based on multiple equilibria, by approximating the adjacency matrix using fixed point equations of the system and applying spectral clustering techniques to the estimated matrix.

\subsection{Detection from Single Equilibrium}\label{sec_alg_single}
Given only one equilibrium, we employ the $k$-means method to cluster the states, as shown in Algorithm~\ref{alg_1}. It is assumed that the exact equilibrium is obtained. We will study detection from noisy observations in the future.

The neutral steady state $\bfx = \bfo$ provides no information about the community structure, similar to the consensus of linear dynamics such as the DeGroot model. However, the equilibria emerging at the bifurcation from the origin as stated in Proposition~\ref{prop_bifurcation} can reveal information of the network, as shown in the following theorem.

\begin{algorithm}[t]
\caption{(Detection Based on Single Equilibrium)}
\label{alg_1}
\small \textbf{Input:} Community number $k=2$.\\
\textbf{Output:}~Community estimate $\hat{\mtcc}$. 
\begin{algorithmic}[1]
\STATE{Obtain an equilibrium $\bfx^*$ of the model~\eqref{eq_opinion_model_compactform}.}
\STATE{Apply $k$-means to $\bfx^*$ to get an estimate of the community structure $\mtcc$.}
\end{algorithmic}
\end{algorithm}

\begin{thm}\label{thm_main}
    Suppose that Assumption~\ref{asmp_connect} holds, $\bfb = \bfo$, and the equilibrium $\bfx^*$ emerges at the bifurcation from the origin as stated in Proposition~\ref{prop_bifurcation}.
    \begin{enumerate}[leftmargin=*, itemsep=0pt, parsep=0pt,label=(\roman*)]
        \item Assume that $\gamma = 1/\Delta > 0$, and $u-\bar{u}_1>0$ is small enough, where $\bar{u}_1 = d/(\alpha + \gamma \lambda_{\max}(\EE\{A\}))$.
        \begin{enumerate}[label=(i.\alph*)]
            \item For $\tsbm(\bfn,\bm{\ell})$, if $n_2 =o(n_1)$, $\ell_{11} n_1 = \Theta(\ell_{22} n_2)$, and $\ell_{12} = \Theta(\sqrt{\ell_{11}\ell_{22}})$, then Algorithm~\ref{alg_1} achieves almost exact recovery.
            \item For $\tssbm(n,\bm{\ell})$, it holds that $\|\bfx^* - \theta c(u) \bfl_n/\sqrt{n}\| = o(c(u))$ w.h.p., for some $\theta \in \{1,-1\}$. Here $c(u) \in \RR$ depends on $u$ and $c(u) \to 0$ as $u\to \bar{u}_1$. If $\liminf \ell_{\tx{d}}/\ell_{\tx{s}} > 3$, then Algorithm~\ref{alg_1} cannot achieve almost exact recovery.
        \end{enumerate}
        \item Assume that $\gamma = - 1/\Delta < 0$, and $u-\bar{u}_2>0$ is small enough, where $\bar{u}_2 = d/(\alpha+\gamma\lambda_{\min}(\EE\{A\}))$. For $\tssbm(n,\bm{\ell})$, if $\ell_{\tx{s}} < \ell_{\tx{d}}$, then Algorithm~\ref{alg_1} achieves almost exact recovery. \hfill\QED
    \end{enumerate}
\end{thm}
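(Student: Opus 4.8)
The plan is to analyze the bifurcating equilibrium $\bfx^*$ via a perturbative (Lyapunov–Schmidt style) expansion near the bifurcation point and show that, up to lower-order terms, $\bfx^*$ is aligned with the relevant eigenvector of $A$ — namely $\lambda_{\max}(A)$ for part (i) and $\lambda_{\min}(A)$ for part (ii) — and then transfer this to the corresponding eigenvector of $\EE\{A\}$ using concentration of random graphs. For the SBM, $\EE\{A\}$ has rank (at most) two, with eigenvectors that are constant on each community; in the SSBM case the two eigenvectors are (proportional to) $\bfl_n$ and the community sign vector $\bfchi$ with $\chi_i = \pm 1$ according to $\mtcc_i$. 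The eigenvalue of $\EE\{A\}$ associated with $\bfl_n$ is essentially $\tfrac{n}{2}(\ell_{\tx{s}}+\ell_{\tx{d}})$, and the one associated with $\bfchi$ is essentially $\tfrac{n}{2}(\ell_{\tx{s}}-\ell_{\tx{d}})$; when $\ell_{\tx{s}} < \ell_{\tx{d}}$ the latter is negative, so $\lambda_{\min}(\EE\{A\})$ corresponds to the sign vector $\bfchi$, which is exactly what $k$-means needs.

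The key steps, in order. (1) \emph{Eigenvector/eigenvalue control of $A$.} Under Assumption~\ref{asmp_connect} (which forces the average degree to grow faster than $\log n$), apply a matrix concentration bound (e.g. for the centered adjacency matrix, $\|A - \EE\{A\}\| = O(\sqrt{\Delta})$ w.h.p.), and then a Davis–Kahan / $\sin\Theta$ argument to conclude that the unit eigenvector $\bfv_{\min}$ of $A$ for $\lambda_{\min}(A)$ satisfies $\|\bfv_{\min} - \bfchi/\sqrt n\| = o(1)$ w.h.p. (after fixing a sign), since the spectral gap $|\lambda_{\min}(\EE\{A\}) - 0| = \Theta(n(\ell_{\tx{d}}-\ell_{\tx{s}}))$ dominates the perturbation $O(\sqrt\Delta)$. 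Also $\lambda_{\min}(A) = (1+o(1))\lambda_{\min}(\EE\{A\})$, so $\bar u_2 = d/(\alpha + \gamma\lambda_{\min}(\EE\{A\}))$ is, to leading order, the true bifurcation threshold $u_2$. (2) \emph{Shape of the bifurcating branch.} Writing the equilibrium condition $d\bfx = u\,\bfS(\alpha\bfx + \gamma A\bfx)$ and expanding $\bfS(\bfz) = \bfz - \tfrac13 \bfz^{\odot 3} + O(\bfz^{\odot 5})$, a standard pitchfork normal-form computation along the one-dimensional kernel spanned by $\bfv_{\min}$ gives $\bfx^* = r\,\bfv_{\min} + O(r^3)$ with $r = c(u) \to 0$ as $u \downarrow \bar u_2$; the cubic coefficient is negative (supercritical) because of the sign condition $\sgn(S''(z)) = -\sgn(z)$, so the branch exists for $u > \bar u_2$. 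Combining (1) and (2), $\|\bfx^* - \theta\, c(u)\, \bfchi/\sqrt n\| = o(c(u))$ w.h.p. for some $\theta\in\{1,-1\}$. (3) \emph{$k$-means succeeds.} Since the target vector $c(u)\bfchi/\sqrt n$ takes exactly two distinct values (one per community), each with $n/2$ coordinates, and the deviation has $\ell_2$-norm $o(c(u))$, only $o(n)$ coordinates can deviate from their community center by more than, say, $c(u)/(2\sqrt n)$; hence the optimal $2$-means clustering of $\bfx^*$ (which is a global optimum) misclassifies at most $o(n)$ agents, giving $\mathrm{Acc}(\mtcc,\hat\mtcc) = 1 - o(1)$ w.h.p., i.e. almost exact recovery.

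I expect the main obstacle to be step (2): making the Lyapunov–Schmidt reduction rigorous \emph{uniformly in $n$}, since the ambient dimension grows and the implicit-function/normal-form constants must be controlled independently of $n$. Concretely, one must bound the inverse of $d I - u(\alpha I + \gamma A)$ restricted to the complement of $\bfv_{\min}$ (the spectral gap between $\lambda_{\min}(A)$ and the rest of the spectrum of $A$, after concentration, is $\Theta(n(\ell_{\tx{d}}-\ell_{\tx{s}})) \gg \sqrt\Delta$, which does give a uniform bound), and control the cubic remainder $\|\bfx^{\odot 3}\|$ in terms of $\|\bfx\|$ using that $\bfx^*$ is close to a $\pm$-balanced vector, so $\|\bfx^*\|_\infty = \Theta(\|\bfx^*\|/\sqrt n)$ and hence $\|\bfx^{*\odot 3}\| = O(\|\bfx^*\|^3/n)$. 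A secondary technical point is verifying that the condition $\ell_{\tx{d}} = \omega(\sqrt{\ell_{\tx{s}}\log n})$ in Assumption~\ref{asmp_connect} is exactly what is needed for the concentration-plus-Davis–Kahan estimate to beat the eigenvalue gap when the graph is sparse; once that is in hand, steps (1) and (3) are routine.
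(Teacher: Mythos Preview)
Your overall strategy---Lyapunov--Schmidt reduction to align $\bfx^*$ with the extremal eigenvector of $A$, then concentration plus Davis--Kahan to transfer to the eigenvector of $\EE\{A\}$, then a $k$-means analysis on a near-blockwise-constant vector---is exactly the paper's approach, and your sketch handles part~(ii) and the first assertion of~(i.b) correctly (the paper's concentration bound is $O(\sqrt{\Delta\log n})$ rather than your $O(\sqrt\Delta)$, but this is immaterial).

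Two genuine gaps remain. For~(i.a) you have not verified that the Perron eigenvector of $\EE\{A\}$ actually separates the two communities in the asymmetric SBM. Unlike the SSBM, here the leading eigenvector is a positive vector $[w_1\bfl_{n_1}^\tp~w_2\bfl_{n_2}^\tp]^\tp$, and one must use the specific hypotheses $n_2=o(n_1)$, $\ell_{11}n_1=\Theta(\ell_{22}n_2)$, $\ell_{12}=\Theta(\sqrt{\ell_{11}\ell_{22}})$ to show that $w_1=\Theta(1/\sqrt{n})$ while $w_2=\Theta(1/\sqrt{n_2})=\omega(1/\sqrt{n})$, so that the gap $|w_1-w_2|$ dominates the per-entry Davis--Kahan error. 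Your proposal asserts only that the eigenvectors are ``constant on each community,'' which is not enough; without this scale computation, (i.a) is not established.

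More significantly, you have not addressed the \emph{negative} half of~(i.b): that under $\liminf \ell_{\tx{d}}/\ell_{\tx{s}} > 3$ the algorithm \emph{cannot} achieve almost exact recovery. Your framework only gives $\bfx^* = \theta c(u)\bfl_n/\sqrt{n} + o(c(u))$, which is suggestive but does not by itself rule out the possibility that the $o(c(u))$ correction carries the community labels well enough for $k$-means to succeed. The paper closes this by contradiction: assuming almost exact recovery, most entries of $\bar{\bfw}^{\max}$ cluster near two values $\chi_1^*>\chi_2^*>0$ with $|\chi_1^*-\chi_2^*|>3\eps$; computing $\bar A\bar{\bfw}^{\max}$ blockwise then shows, using $\ell_{\tx{d}}>3\ell_{\tx{s}}$, that the $\mtcv_2$-block of $\bar A\bar{\bfw}^{\max}$ would \emph{exceed} the $\mtcv_1$-block, contradicting the eigenvector relation $\bar A\bar{\bfw}^{\max}=\bar\lambda_{\max}\bar{\bfw}^{\max}$. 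This argument---and in particular why the constant $3$ appears---is entirely absent from your proposal.
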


\begin{rmk}
    The result~(i.a) states that the communities of the SBM can be detected, if the influence weights are positive and the two communities differ in size and link probabilities.
    The result~(i.b) shows that the entries of the equilibrium $\bfx^*$ are close to each other in the SSBM. 
    As a result, the communities cannot be detected by clustering.
    However, almost exact recovery in the disassortative SSBM (inter-community connections are denser than intra-community) can be achieved, if the influence weight is negative. 

    From Proposition~\ref{prop_bifurcation}, the equilibrium $\bfx^*$ reflects the eigenvector centrality if the influence weight is positive. Note that agents in the  SSBM have similar centrality. Conditions in~(i.a) imply a leader-follower structure, where the community~$\mtcv_2$ has much fewer agents but denser connections, compared with the community~$\mtcv_1$. This structure ensures different centrality in communities, making detection possible. In the case of negative influence, the equilibrium $\bfx^*$ is close to the eigenvector corresponding to $\lambda_{\min}(A)$. For the disassortative SSBM, the sign of entries of that eigenvector corresponds to community labels, so almost exact recovery can be achieved.
    
    In the theorem, we assume that $u$ is close to the bifurcation threshold $\bar{u}_1$ or $\bar{u}_2$. To ensure detectability, $u$ has to be neither too small or too large. When $u$ is below the bifurcation threshold, only the neutrality equilibrium exists, providing no information about the graph. As $u$ increases, agents become more attentive to nonlinear interactions, and equilibria move away from the eigenvector centrality. See Section~\ref{sec_simul} for how $u$ and different sigmoid functions influence detection.
    \hfill\QED
\end{rmk}

\quad\emph{Proof Sketch:}
    We briefly explain how the results are obtained. 
    The detailed proof is given in Appendix~\ref{app_pf_thm}. 
    First, leveraging the Lyapunov-Schmidt reduction, we can calculate an explicit form of the equilibrium
    . The equilibrium is close to (up to a sign flip) the eigenvector corresponding to the largest or smallest eigenvalue of the adjacency matrix $A$. Next, from matrix perturbation theory and concentration inequalities, it follows that the aforementioned eigenvector is close to the eigenvector of the expected graph
    . Analyzing the properties of the expected graph and eigenvectors completes the proof.  \hfill\QED 

We conclude this subsection by discussing several extensions of Theorem~\ref{thm_main}. The result in~(i.a) holds as long as agents in different communities have different eigenvector centrality. 
The disassortative condition for $\ell_{\tx{d}}$ in~(i.b) is technical, and simulation given in Section~\ref{sec_simul} shows that Algorithm~\ref{alg_1} also fails to achieve almost exact recovery for the assortative SBM. 
The theorem only considers all-positive ($\gamma>0$) or all-negative ($\gamma<0$) relationships. A natural extension is that agents within the same community have positive edges, whereas agents between the two communities have negative edges. 
In this signed-graph case, equilibria of different communities have opposite signs (see e.g., \cite[Theorem~1]{bizyaeva2022switching}), so communities are detectable.

\begin{algorithm}[t]
\caption{(Detection Based on Multiple Equilibria)}
\label{alg_2}
\small \textbf{Input:} Multiple input-equilibrium pairs $\{[\bfb^{(1)}~\bfx^{(1)}],$ $ \dots, [\bfb^{(m)}~\bfx^{(m)}]\}$ with $m\in \NN_+$, model parameters $\alpha$, $\gamma$, $d$, $u$, and community number $k = 2$.\\
\textbf{Output:}~Community estimate $\hat{\mtcc}$. 
\begin{algorithmic}[1]
\STATE{Calculate for $1\le k\le m$
\begin{align*}
    \bfy^{(k)} = \frac1\gamma \Big(\bfS^{-1} \Big(\frac1u (d \bfx^{(k)} - \bfb^{(k)}) \Big) - \alpha \bfx^{(k)} \Big).
\end{align*}
}
\STATE{Calculate the estimate $\hat{A}$ of the adjacency matrix $A$:
\begin{align*}
    \tilde{A} &= \bfY \bfX^{\dagger},\\
    \hat{A} &= (\tilde{A} + \tilde{A}^\tp)/2,
\end{align*}
where
\begin{align*}
    \bfX &= [\bfx^{(1)}~\cdots~\bfx^{(m)}],\\
    \bfY &= [\bfy^{(1)}~\cdots~\bfy^{(m)}].
\end{align*}
}
\STATE{Find the eigenvector corresponding to the second largest eigenvalue of $\hat{A}$, denoted by $\hat{\bfv}$. Apply $k$-means to $\hat{\bfv}$ and obtain an estimate of community labels $\hat{\mtcc}$.
}
\end{algorithmic}
\end{algorithm}

\begin{figure*}[tbp]
    \centering    \subfigure[\label{fig_eep_acc}Accuracy of Algorithm~\ref{alg_1} increases with $n$ but decreases with $u$.]{\quad~~ \includegraphics[width=0.22\textwidth]{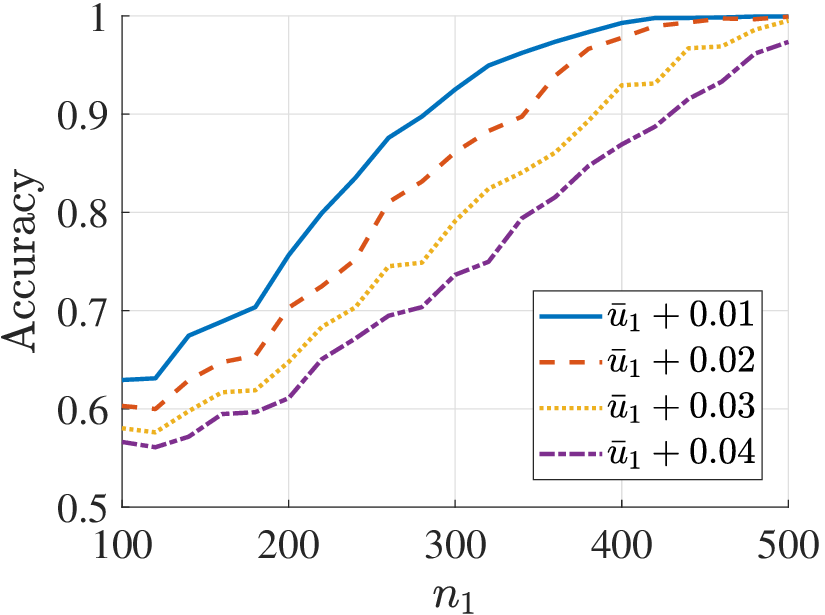} \quad~~
	}~~    \subfigure[\label{fig_eep_nonlinear}Accuracy of Algorithm~\ref{alg_1} for the model~\eqref{eq_opinion_model_compactform} with $S(x)$ being different nonlinear functions, where $u = \bar{u}_1 + 0.04$.]{\quad~~\includegraphics[width=0.22\textwidth]{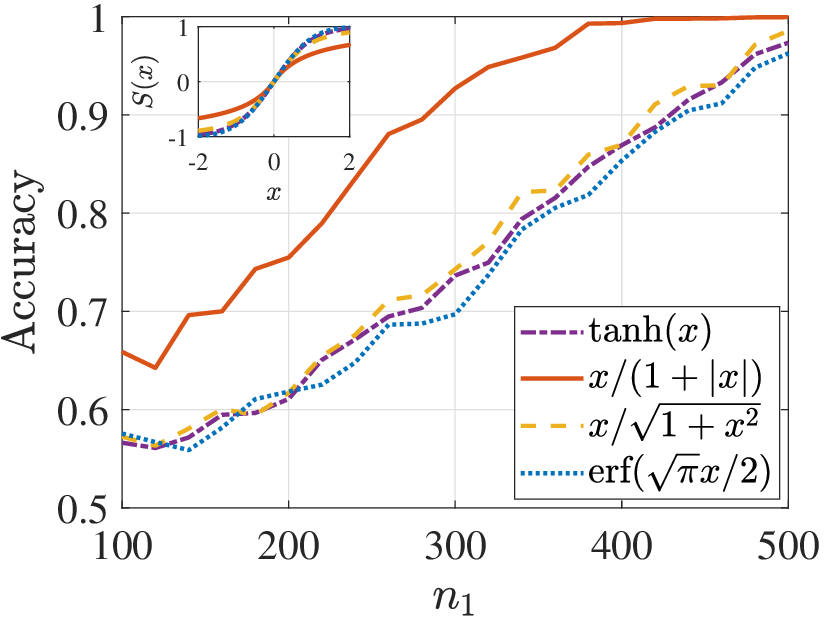} \quad~~
	} ~~   \subfigure[\label{fig_eep_comp}Accuracy of Algorithm~\ref{alg_1} with $u = \bar{u}_1+0.01$, the Louvain, the Girvan-Newman, and spectral clustering methods.]{\quad~~\includegraphics[width=0.22\textwidth]{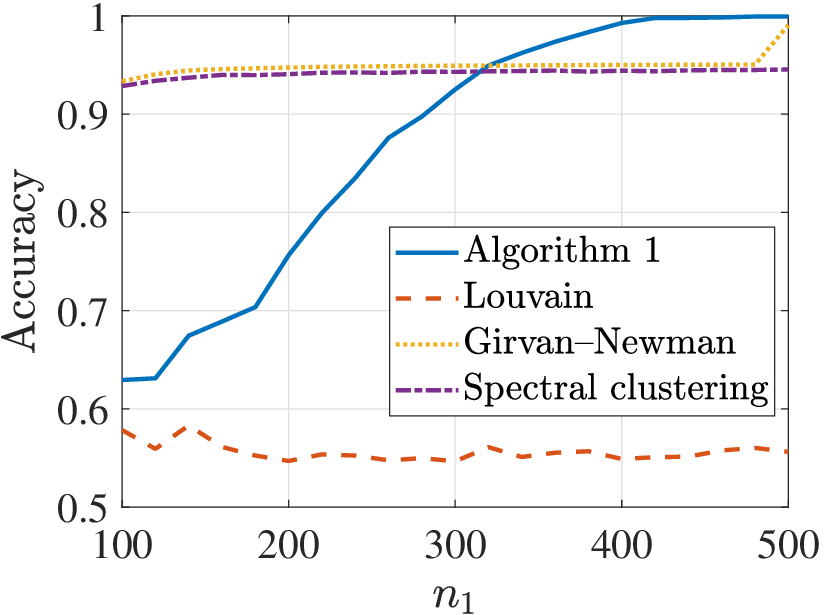} \quad~~
	}

    \caption{\label{fig_eep}Detection accuracy for an SBM with unequal-sized communities and $\gamma > 0$.}
\end{figure*}

\subsection{Detection from Multiple Equilibria}\label{sec_alg_multiple}

In this subsection, we assume that the system has external inputs, and the resulting equilibria from multiple trajectories are available. Consider the dataset of input-equilibrium pairs $\{[\bfb^{(1)}~\bfx^{(1)}], \dots, [\bfb^{(m)}~\bfx^{(m)}]\}$ with $m\in \NN_+$. We will design a community detection algorithm based on fixed point equations of the model.

The multiple trajectory case is a commonly studied scenario, in which different discussions are observed (e.g.,~\cite{ravazzi2021learning,wai2019blind,schaub2020blind}). The inputs can be seen as individual prejudice towards different topics or external information that is given for each discussion. 

Note that an input-equilibrium pair $[\bfb^{(k)}~\bfx^{(k)}]$, $1\le k\le m$, satisfies the following fixed point equation 
\begin{align*}
    \bfo = - d \bfx^{(k)} + u \bfS (\alpha \bfx^{(k)} + \gamma A \bfx^{(k)} ) + \bfb^{(k)},
\end{align*}
which implies that
\begin{align*}
    \bfy^{(k)} \Let \frac1\gamma \Big(\bfS^{-1} \Big(\frac1u (d \bfx^{(k)} - \bfb^{(k)}) \Big) - \alpha \bfx^{(k)} \Big) = A \bfx^{(k)},
\end{align*}
where $\bfS^{-1}(\bfx) \Let [S^{-1}(x_1),\dots,S^{-1}(x_n)]^\tp$ for $\bfx\in\RR^n$. Hence, 
$
    \bfY\Let [\bfy^{(1)}~\cdots~\bfy^{(m)}] = A [\bfx^{(1)}~\cdots~\bfx^{(m)}].
$
It is expected that $\bfX \Let [\bfx^{(1)}~\cdots~\bfx^{(m)}]$ is invertible when $m$ is large and $\{\bfb^{(k)}\}$ sufficiently excites the system. Then the estimate of the adjacency matrix $A$ can be given by $\bfY \bfX^{-1}$. When the number of samples is much less than the network size,  $\hat{A} = \bfY \bfX^{\dagger}$ gives an approximation of the matrix $A$, where $\bfX^{\dagger}$ is the pseudo-inverse of $\bfX$. 

Inspired by this observation, we propose Algorithm~\ref{alg_2} which utilizes spectral clustering techniques to recover the communities for the SSBM. 
Line~$1$ of the algorithm calculates the data matrix $\bfY$. In Line~$2$, $\bfY \bfX^{\dagger}$ is calculated and then projected to the set of symmetric matrices, since $A$ is symmetric. For the adjacency matrix of an SSBM, its expectation has a block structure, and the eigenvector corresponding to its second largest eigenvalue satisfies that the entries in different communities have different signs. Leveraging this property, Line~$3$ applies the spectral clustering method. Knowing the model and parameters is a strong assumption, and future work will explore how to address this limitation.

As shown in Theorem~\ref{thm_main}~(i.b), the SSBM may not be recovered by using a single equilibrium without external inputs. However, when multiple input-equilibrium pairs are available, the recovery is possible, as shown in Section~\ref{sec_simul}.

\section{Numerical Experiments}\label{sec_simul}

In this section, we conduct numerical experiments for community detection in the model~\eqref{eq_opinion_model_compactform}. For all experiments, we set the damping coefficient $d=1$, the agent self weight $\alpha = 1$, the positive influence weight $\gamma = 1/\Delta$, and the negative weight $\gamma = -1/\Delta$, where $\Delta$ is the maximum expected degree of the SBM. The equilibria are obtained using the \texttt{ode45} solver in MATLAB. The results are consistent with other solvers such as \texttt{ode78}, \texttt{ode89}, and \texttt{ode15s}.

We first study the SBM that has two communities different in size and link probabilities, as in Theorem~\ref{thm_main}~(i.a). In this experiment, we consider positive interpersonal influence ($\gamma > 0$), and calculate the averaged accuracy of the algorithm based on $50$ random graph samples for the network size $n_1$ from $100$ to $500$ and $n_2=0.05n_1$.  
The link probabilities are set to be $\ell_{11} = 0.05$, $\ell_{12} = 0.1$, and $\ell_{22} = 0.5$. Additionally, we set the attention parameter $u = \bar{u}_1 + 0.01, \dots, \bar{u}_1 + 0.04$, where $\bar{u}_1$ is given in Theorem~\ref{thm_main}. Fig.~\ref{fig_eep_acc} shows that the averaged accuracy of Algorithm~\ref{alg_1} increases with the network size. Additionally, as $u$ grows, the performance becomes worse, indicating the impact of nonlinearity on the structure of the equilibria. To study the influence of the nonlinear interaction on detectability, we examine detection performance for the model~\eqref{eq_opinion_model_compactform} with the saturating function $S(x) = x/(1+|x|)$, $x/\sqrt{1+x^2}$, $\tanh (x)$, and $\tx{erf}(\sqrt{\pi}x/2)$, where $\tx{erf}$ is the Gauss error function.
Fig.~\ref{fig_eep_nonlinear} illustrates that the detection performance worsens as agents become more easily saturated.
Next, we compare Algorithm~\ref{alg_1} with classic methods which are applied directly to the adjacency matrices.
As shown in Fig.~\ref{fig_eep_comp}, the Girvan--Newman method (where the partition with two communities is selected) and the spectral clustering have high accuracy because the SBM has two blocks. 
For the Louvain method, we set its resolution parameter to produce a partition with two communities. It has difficulty identifying the smaller community, and hence does not perform well.
Utilizing only state observations instead of graphs, Algorithm~\ref{alg_1} can achieve high accuracy when $n$ is large enough while $u-\bar{u}_1$ is not too large.

For SSBM with positive weights, Algorithm~\ref{alg_1} cannot detect the communities w.h.p. This is shown in Fig.~\ref{fig_2ssbm_acc}, where the averaged accuracy of the algorithm is close to $0.5$, similar to random guess. This validates Theorem~\ref{thm_main}~(i.b). Here the attention parameter is chosen as above. The link probability within a community is $\ell_{\tx{s}} = 0.3$ and that between communities $\ell_{\tx{d}} = 0.05$.


\begin{figure*}
    \centering
    \subfigure[\label{fig_2ssbm_acc} Positive interpersonal influence. The accuracy is similar to random guess (close to $0.5$).]{\qquad\qquad \includegraphics[width=0.21\textwidth]{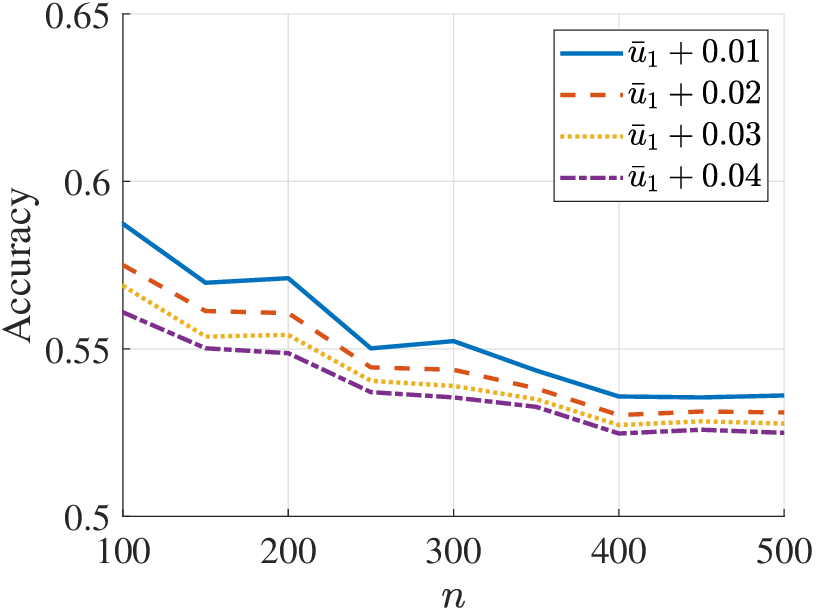}\qquad\qquad}\qquad
    \subfigure[\label{fig_neg_acc} The dissassortative SSBM with negative interpersonal influence. The accuracy increases with $n$.]{\qquad\qquad\includegraphics[width=0.21\textwidth]{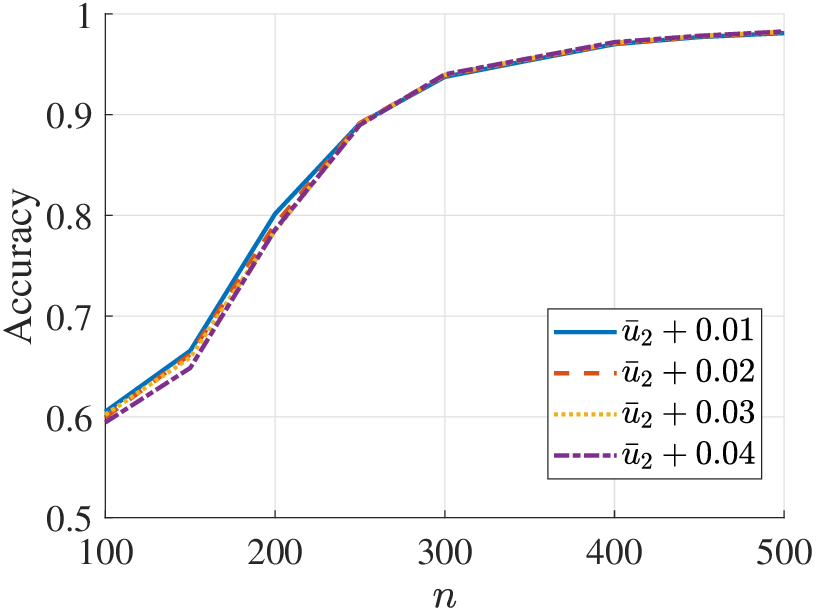} \qquad\qquad}
    \caption{Detection accuracy for the SSBM.}
\end{figure*}

\begin{figure*}[tbp]
    \centering
    \subfigure[\label{fig_multi_acc} The accuracy of Algorithm~\ref{alg_2}  increases with the sample size $m$ and the network size $n$. ]{\qquad\qquad~~ \includegraphics[width=0.21\textwidth]{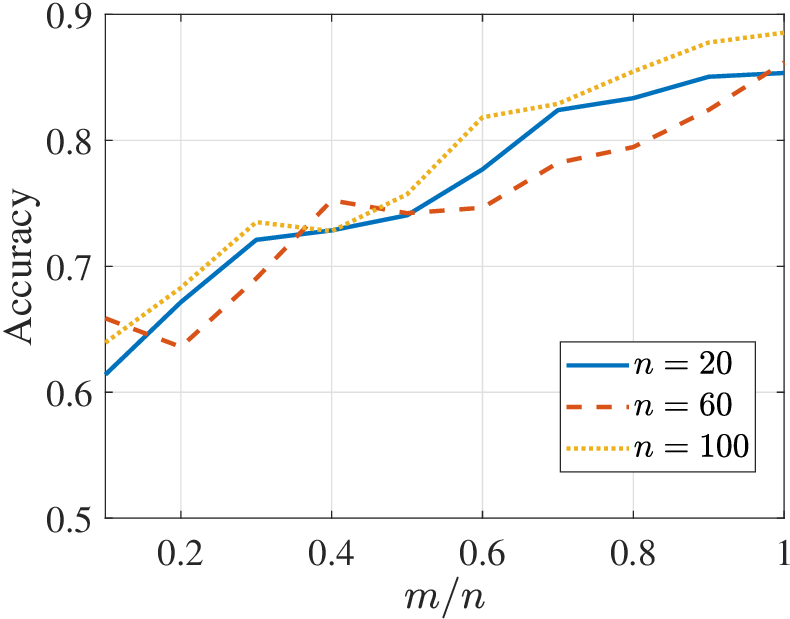} \qquad\qquad~~}\qquad
    \subfigure[\label{fig_multi_comp}The accuracy of Algorithm~\ref{alg_2} with $n=20$, the Louvain, the Girvan--Newman, and spectral clustering methods.]{\qquad\qquad ~~ \includegraphics[width=0.21\textwidth]{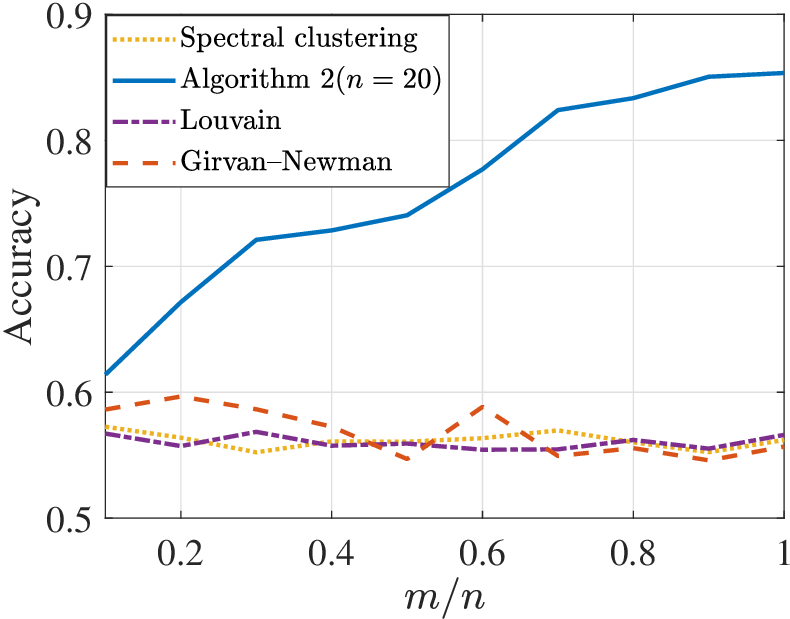} \qquad\qquad~~}
    \caption{\label{fig_multi}Detection accuracy from multiple input-equilibrium pairs.}
\end{figure*}

We then consider the disassortative SSBM with negative influence weights, to validate Theorem~\ref{thm_main}~(ii). In the experiment, we set $u = \bar{u}_2+0.01$, $\dots$, $\bar{u}_2 + 0.04$, and link probabilities to be $\ell_{\tx{s}} = 0.005$ and $\ell_{\tx{d}} = 0.03$. The link probabilities here are much smaller than previous ones, but the algorithm still has high detection accuracy and the attention parameter has less impact (see Fig.~\ref{fig_neg_acc}). 

Finally, we study the performance of Algorithm~\ref{alg_2} based on multiple input-equilibrium pairs $\{[\bfb^{(1)}~\bfx^{(1)}],$ $ \dots, [\bfb^{(m)}~\bfx^{(m)}]\}$. We consider positive influence weights and the same SSBM as the second experiment. The network size $n$ is set to be  $20,60$, and $100$, and $u$ to be $\bar{u}_1 + 0.01$. For each network, we generate $10$ graph samples, and for each graph sample we collect $10$ sets of input-equilibrium pairs where $\bfb^{(k)}$ are independently generated from the standard Gaussian distribution. To test the effect of sample size, we set $m = 0.1n$, $0.2n$, $\dots$, $n$. Fig.~\ref{fig_multi_acc} shows that the performance of the algorithm increases with both sample and network sizes. We then compare the proposed algorithm with classic methods for community detection from dynamical observations: the spectral clustering applied to the sample covariance matrix $\frac1m \sum_{k=1}^m (\bfx^{(k)} - \bar{\bfx}) (\bfx^{(k)} - \bar{\bfx})^\tp$ (e.g.,~\cite{wai2019blind,schaub2020blind}), the Girvan-Newman and the Louvain algorithms applied to correlation matrices~\cite{fenn2012dynamical}. The accuracy of these methods is below $0.6$, as shown in Fig.~\ref{fig_multi_comp}.
Algorithm~\ref{alg_2} performs better because it exploits the nonlinear structure and utilizes model parameter information.

\section{Conclusion}\label{sec_conclusion}
We studied community detection for a nonlinear opinion dynamics model over a stochastic block model. Two algorithms based on a single or multiple equilibria were investigated. Future work includes studying multi-dimensional generalizations~\cite{bizyaeva2023multi}, labeled SBMs~\cite{yun2014community}, and joint learning of communities and model parameters.

\appendix
\subsection{Proof of Theorem~\ref{thm_main}}\label{app_pf_thm}
\subsubsection{Proof of (i.a)}\label{app_pf_thm_ia} The proof is divided into three steps. First, we use the Lyapunov-Schmidt reduction~\cite{golubitsky1985singularities} to derive an explicit expression of the equilibrium $\bfx^*$. Then we apply concentration inequalities to approximate $\bfx^*$ by using the eigenvector corresponding to the largest eigenvalue of the expected adjacency matrix $\EE\{A\}$. Finally we obtain the conclusion by examining the properties of the eigenvector.

\textbf{Step 1.} This step is devoted to derive the following result.
\begin{lem}\label{lem_LS_reduction}
    Suppose that the graph $\mtcg = (\mtcv,\mtce,A)$ is connected, $
    \bfb = \bfo$, and $\bfx^*$ is an equilibrium emerging at the bifurcation from the origin as stated in Proposition~\ref{prop_bifurcation}. If $\gamma>0$, and $u-u_1>0$ is small enough, then $\bfx^* = c(u) \bfw^{\max}$, where $\bfw^{\max}$ is the unit eigenvector of $A$ corresponding to $\lambda_{\max}(A)$, and $c(u)$ depends on $u$ and $c(u) \to 0$ as $u\to u_1$.  
\end{lem}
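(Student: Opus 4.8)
The plan is to carry out a Lyapunov--Schmidt reduction of the equilibrium equation about the bifurcation point and to prove that the bifurcating branch lies \emph{exactly} on the line spanned by $\bfw^{\max}$, so that the orthogonal remainder produced by a generic reduction vanishes identically. Write the equilibrium condition of~\eqref{eq_opinion_model_compactform} with $\bfb=\bfo$ as $G(\bfx,u) \Let -d\bfx + u\bfS\big((\alpha I_n + \gamma A)\bfx\big) = \bfo$. Since $\mtcg$ is connected, Perron--Frobenius makes $\lambda_{\max}(A)$ simple with unit eigenvector $\bfw^{\max}$, so the Jacobian $D_{\bfx}G(\bfo,u) = -dI_n + u(\alpha I_n + \gamma A)$ has a one-dimensional kernel exactly at $u=u_1 = d/(\alpha+\gamma\lambda_{\max}(A))$, spanned by $\bfw^{\max}$. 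I would split $\RR^n = \tx{span}(\bfw^{\max})\oplus(\bfw^{\max})^{\perp}$, write $\bfx = c\,\bfw^{\max}+\bfz$ with $\bfz\perp\bfw^{\max}$, and let $\Pi$ be the orthogonal projection onto $(\bfw^{\max})^{\perp}$. The equation $G=\bfo$ then separates into the range equation $\Pi\,G(c\bfw^{\max}+\bfz,u)=\bfo$ and the scalar kernel equation $\langle\bfw^{\max},G(c\bfw^{\max}+\bfz,u)\rangle=0$.

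Because $\Pi D_{\bfx}G(\bfo,u)\Pi$ stays invertible on $(\bfw^{\max})^{\perp}$ for $u$ near $u_1$ (no other eigenvalue of $A$ crosses the critical value), the implicit function theorem yields a unique small solution $\bfz=\bfz(c,u)$ of the range equation with $\bfz(0,u)=\bfo$. The exact identity of the lemma is therefore \emph{equivalent} to $\bfz(c,u)\equiv\bfo$, and the route to it is to verify that $\bfz=\bfo$ already solves the range equation, i.e. that $\Pi\,G(c\bfw^{\max},u)=\bfo$. Using $(\alpha I_n+\gamma A)\bfw^{\max} = \beta\,\bfw^{\max}$ with $\beta \Let \alpha+\gamma\lambda_{\max}(A)$ and $\Pi\bfw^{\max}=\bfo$, this collapses to showing that $\bfS(c\beta\,\bfw^{\max})$ is parallel to $\bfw^{\max}$.

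This parallelism is the crux and the step I expect to be the main obstacle. The $i$-th entry of $\bfS(c\beta\bfw^{\max})$ is $\tanh(c\beta\,w^{\max}_i)$, and since $\tanh$ is nonlinear, the odd expansion $\tanh(s)=s-s^3/3+\cdots$ contributes a correction with entries proportional to $(w^{\max}_i)^3$, which $\Pi$ annihilates only when all entries $w^{\max}_i$ share a common magnitude, i.e. when $\bfw^{\max}$ is a sign vector up to scaling. In that case, writing $a=\|\bfw^{\max}\|_{\infty}$ and using that $\tanh$ is odd, one has exactly $\bfS(c\beta\bfw^{\max}) = \big(\tanh(c\beta a)/a\big)\,\bfw^{\max}$, so $\Pi\,G(c\bfw^{\max},u)=\bfo$ and hence $\bfz\equiv\bfo$, giving the claimed identity. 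Establishing this constant-magnitude property of the critical eigenvector — and, where it holds only up to negligible fluctuations for a random $A$, absorbing the discrepancy — is precisely the hard point; it is what the subsequent concentration step (Step 2) in the proof of~(i.a) is set up to handle.

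Granting $\bfz\equiv\bfo$, the kernel equation reduces to the scalar pitchfork $g(c,u)\Let\langle\bfw^{\max},G(c\bfw^{\max},u)\rangle = -dc + u\sum_i w^{\max}_i\tanh(c\beta w^{\max}_i)=0$. It is odd in $c$ (as $S$ is odd), with $g(0,u)=0$, derivative $\partial_c g(0,u)=-d+u\beta=\beta(u-u_1)$ (vanishing at $u_1$, positive for $u>u_1$), transversality $\partial_u\partial_c g(0,u_1)=\beta\neq0$, and a softening cubic term inherited from $S'''(0)<0$ (guaranteed by $\sgn(S''(z))=-\sgn(z)$). Hence a nontrivial branch $c=c(u)$ is born in a supercritical pitchfork at $u_1$; after the substitution $s=c\beta a$ the equation reduces to $d s = u\beta\tanh(s)$, which for $u-u_1>0$ small admits a unique nonzero small root with $s\to0$, hence $c(u)\to0$, as $u\to u_1^{+}$. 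Together with Proposition~\ref{prop_bifurcation}, which identifies this branch with the emerging equilibrium $\bfx^*$, this completes the argument, the decisive and most delicate point remaining the exact vanishing of $\bfz$.
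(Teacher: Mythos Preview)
Your Lyapunov--Schmidt setup and the decomposition $\bfx=c\,\bfw^{\max}+\bfz$ match the paper's exactly, and your analysis of the scalar kernel equation (the pitchfork) is fine. The genuine gap is the plan to obtain the lemma by proving $\bfz\equiv\bfo$. As you yourself note, $\Pi\,G(c\bfw^{\max},u)=\bfo$ is equivalent to $\bfS(c\beta\bfw^{\max})\parallel\bfw^{\max}$, which for a nonlinear odd $S$ forces all $|w_i^{\max}|$ to coincide. This fails for a generic connected graph, and---crucially---it fails precisely for the SBM graphs the lemma is meant to serve: in case~(i.a) the whole point is that agents in $\mtcv_1$ and $\mtcv_2$ have \emph{different} Perron-eigenvector entries $w_1\ne w_2$, which is exactly what makes $k$-means on $\bfx^*$ separate the communities. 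So the equal-magnitude property cannot be supplied by Step~2 either: concentration only brings $\bfw^{\max}(A)$ close to the block-constant vector $\bar{\bfw}^{\max}=[w_1\bfl_{n_1}^\tp~w_2\bfl_{n_2}^\tp]^\tp$, not to a vector with entries of a single magnitude. Hoping that ``Step~2 handles it'' conflates two different approximations.

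The paper does not try to show $\bfz=\bfo$; it shows $\bfz$ is higher-order small. Writing the range equation as $\bfF(x_m,\tilde{\bfx},u)=\tilde{W}^\tp\bfPhi(\bfw^{\max}x_m+\tilde{W}\tilde{\bfx},u)=\bfo$, one has $D_{x_m}\bfF(0,\bfo,u)=\tilde{W}^\tp[-dI+u(\alpha I+\gamma A)]\bfw^{\max}=0$ for \emph{every} $u$ (the bracket sends $\bfw^{\max}$ into $\tx{span}(\bfw^{\max})$, which $\tilde{W}^\tp$ annihilates), and oddness of $\bfPhi$ in $\bfx$ makes the implicitly defined $\tilde{\bfx}(x_m,u)$ odd in $x_m$. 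Together these kill the linear and quadratic terms, so $\tilde{\bfx}=O(x_m^3)$, while the pitchfork gives $x_m=c(u)\to0$. Hence $\bfx^*=c(u)\bfw^{\max}+o(c(u))$; the equality in the lemma is meant (and is used, e.g.\ in the proof of~(i.b)) only to leading order. The fix to your argument is simply to replace the goal ``$\bfz\equiv\bfo$'' by ``$\bfz=o(c)$'' via the vanishing linear term plus oddness, rather than by a structural hypothesis on $\bfw^{\max}$ that does not hold.
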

\begin{proof}
    For an equilibrium $\bfx$ of the model~\eqref{eq_opinion_model_compactform} with $\bfb = \bfo$, it is a solution of the equation
    \begin{align}\label{eq_Phi}
        \bfPhi(\bfx,u) \Let - d \bfx + u \bfS(\alpha\bfx+\gamma A \bfx) = \bfo.
    \end{align}
    The Jacobian of $\bfPhi$ with respect to $\bfx$ at the origin is 
    \begin{align*}
        D_{\bfx} \bfPhi (\bfo, u) = (u\alpha - d) I + u\gamma A,
    \end{align*}
    and $J \Let D_{\bfx}\bfPhi(\bfo,u_1)$ has a single zero eigenvalue, so $\rank(D_{\bfx}\bfPhi(\bfo,u_1)) = n-1$. The idea of the Lyapunov-Schmidt reduction is to separately solve~\eqref{eq_Phi} near $(\bfo,u_1)$ for the corresponding nondegenerate $n-1$ variables of $\bfx$, and turn~\eqref{eq_Phi} into a reduced equation for the remaining unknown. 
    For symmetric $A$, there is orthogonal $[\tilde{W}~\bfw^{\max}]$ such that
    \begin{align*}
        \begin{bmatrix}
            \tilde{W}^\tp\\
            (\bfw^{\max})^\tp
        \end{bmatrix}
        A [\tilde{W}~\bfw^{\max}] &= 
        \begin{bmatrix}
            \tilde{W}^\tp A \tilde{W} & \bfo \\
            \bfo & \lambda_{\max}(A)
        \end{bmatrix}
    \end{align*}
    Let $E \Let I - \bfw^{\max} (\bfw^{\max})^\tp$ be the projection of $\RR^n$ onto $\range J$, and $\ker E = \operatorname{span} \bfw^{\max}$. Let $I - E$ be the complementary projection. Then the aforementioned decomposition can be written as follows
    \begin{align}\label{eq_Phi_a}
        E \bfPhi(\bfx,u) = \bfo, \\\label{eq_Phi_b}
        (I-E) \bfPhi(\bfx,u) = \bfo.
    \end{align}
    For $\bfx \in \RR^n$, it can be written as $\bfx = \bfw^{\max} x_m + \tilde{W} \tilde{\bfx}$, where $\tilde{\bfx} \in \RR^{n-1}$. Further, since $[\tilde{W} ~\bfw^{\max} ]^\tp E = [\tilde{W} ~\bfo ]^\tp$,~\eqref{eq_Phi_a} is equivalent to
    \begin{align}\label{eq_reduced_Phi_a}
        \bfo = \tilde{W}^\tp \bfPhi(\bfx,u) \teL \bfF(x_m,\tilde{\bfx},u).
    \end{align}
    Applying the implicit function theorem near $(0,\bfo,u_1)$ to~\eqref{eq_reduced_Phi_a} yields the dependence $\tilde{\bfx} = \tilde{\bfx}(x_m,u_1)$. Specifically, since
    \begin{align*}
        D_{x_m} \bfF(0,\bfo,u) &= \tilde{W}^\tp [-d I + u(\alpha I + \gamma A)] \bfw^{\max} = 0,\\
        D_{\tilde{x}} \bfF(0,\bfo,u) &= (u\alpha - d) I_{n-1} + u \gamma \tilde{W}^\tp A \tilde{W} \teL \tilde{J},\\
        D_{u} \bfF(0,\bfo,u) &= \tilde{W}^\tp \bfS(\bfo) = 0,
    \end{align*}
    and $\bfF$ is odd in $\bfx$, $\tilde{\bfx} = \tilde{J}^{-1} O((u-u_1)^3)$. Note that $\|\tilde{J}^{-1}\| = (\gamma \Theta(\lambda_{\max}(A) - \lambda_{n-1}(A)))^{-1}$ with $\lambda_{n-1}(A)$ the second largest eigenvalue, and is of order $\Theta(1)$ under the assumptions of the theorem. So $\tilde{\bfx} =  O((u-u_1)^3)$. As a result,~\eqref{eq_Phi_b} can be reduced to 
    \begin{align*}
        0 &= \bfw^{\max} (\bfw^{\max})^\tp [-d \bfx + u \bfS(\alpha \bfx + \gamma A \bfx)] \\
        &= \bfw^{\max} [- d x_m + u (\bfw^{\max})^\tp \bfS\big((\alpha I + \gamma A) \bfw^{\max} x_m\\
        &\quad~  + O((u-u_1)^3) \big) + O((u-u_1)^3)].
    \end{align*}
    From the Taylor expansion, $x_m = O((u-u_1)^2)$, which yields the conclusion by noticing $\bfx = \bfw^{\max} x_m + \tilde{W} \tilde{\bfx}$.
\end{proof}

\textbf{Step 2.} 
Now we use the eigenvector of $\EE\{A\}$ to approximate $\bfw^{\max}$ of $A$ generated from an SBM. The following lemma quantifies the deviation of $A$ generated from a random graph model and its eigenvalues from the expectation (Theorem~1 of~\cite{chung2011spectra}).
\begin{lem}
    Let $\mtcg = (\mtcv,\mtce,A)$ be a graph generated from a random graph model with independent edges and $|\mtcv| = n$, and $\Delta$ be the maximum expected degree of $\mtcg$. If $\Delta = \omega(\log n)$, then it holds w.h.p. for all $1\le i\le n$ that
    \begin{align*}
        |\lambda_i(A) - \lambda_i(\bar{A})| \le \|A-\bar{A}\| \le c \sqrt{\Delta \log n},
    \end{align*}
    where $c > 0$ is a constant and $\bar{A} \Let \EE\{A\}$. \hfill\QED
\end{lem}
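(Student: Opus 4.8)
\emph{Proof plan.} The first inequality is immediate from Weyl's perturbation bound for symmetric matrices applied to $A = \bar{A} + (A - \bar{A})$, so the whole statement reduces to the concentration bound $\|A - \bar{A}\| \le c\sqrt{\Delta\log n}$ w.h.p. The plan is to write $A - \bar{A}$ as a sum of independent, bounded, mean-zero symmetric random matrices and invoke a matrix Bernstein (equivalently, matrix Chernoff) inequality; the hypothesis $\Delta = \omega(\log n)$ is exactly what places the resulting tail in the sub-Gaussian regime and gives the $\sqrt{\Delta\log n}$ rather than $\log n$ scaling.

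Concretely, since the model has no self-loops, $a_{ii} = \bar{a}_{ii} = 0$ and the upper-triangular entries $\{a_{ij}\}_{i<j}$ are independent, so I would set
\[
A - \bar{A} = \sum_{1\le i<j\le n} X_{ij}, \qquad X_{ij} \Let (a_{ij} - \bar{a}_{ij})\,(e_i e_j^\tp + e_j e_i^\tp),
\]
where $\EE X_{ij} = 0$, each $X_{ij}$ is symmetric and $\|X_{ij}\| \le 1$ almost surely. Since $X_{ij}^2 = (a_{ij} - \bar{a}_{ij})^2(e_i e_i^\tp + e_j e_j^\tp)$, the matrix variance parameter equals the norm of the diagonal matrix $\diag(s_1,\dots,s_n)$ with $s_i = \sum_{j\ne i}\var(a_{ij})$, and $\var(a_{ij}) = \bar{a}_{ij}(1-\bar{a}_{ij}) \le \bar{a}_{ij}$ bounds $s_i \le \sum_{j\ne i}\bar{a}_{ij} \le \Delta$ by the definition of $\Delta$. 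Matrix Bernstein then yields, for every $t>0$,
\[
\PP\{\|A - \bar{A}\| \ge t\} \le 2n\,\exp\!\Big(-\frac{t^2/2}{\Delta + t/3}\Big).
\]

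The final step is to take $t = c\sqrt{\Delta\log n}$: since $\Delta = \omega(\log n)$, we get $t/\Delta = c\sqrt{\log n/\Delta} = o(1)$, hence $\Delta + t/3 \le (1+o(1))\Delta$ and the exponent is at most $-(1-o(1))\,c^2(\log n)/2$, so $\PP\{\|A - \bar{A}\| \ge c\sqrt{\Delta\log n}\} \le 2n^{\,1-(1-o(1))c^2/2} \to 0$ for any fixed $c > 2$. I expect the only delicate point to be this balancing of the tail parameter $t$ against the variance proxy $\Delta$ and the dimensional factor $2n$ from the union bound — it is precisely $\Delta = \omega(\log n)$ that lets the exponent beat the $\log n$ coming from that factor; the variance computation and the a.s.\ bound $\|X_{ij}\|\le 1$ are routine. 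An essentially equivalent route is the original argument of Chung and Radcliffe, which bounds $\lambda_{\max}(\pm(A - \bar{A}))$ directly via an estimate on the trace of the operator moment generating function.
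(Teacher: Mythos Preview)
Your argument is correct: Weyl's inequality gives the first bound, and the matrix-Bernstein decomposition $A-\bar{A}=\sum_{i<j}X_{ij}$ with variance proxy $\max_i s_i\le\Delta$ and a.s.\ bound $\|X_{ij}\|\le 1$ yields the tail estimate, with $\Delta=\omega(\log n)$ placing $t=c\sqrt{\Delta\log n}$ in the sub-Gaussian regime (any fixed $c>\sqrt{2}$ already suffices, not only $c>2$). The paper does not prove this lemma at all --- it is quoted verbatim as Theorem~1 of Chung--Radcliffe~\cite{chung2011spectra} --- and your proof is precisely the matrix-Chernoff argument of that reference, so there is no methodological difference to discuss.
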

Furthermore, the difference of the eigenspaces of $A$ from their expected counterparts can be bounded by using the Davis-Kahan theorem.
Applying~(4.19) of~\cite{vershynin2018high} to $A$ and $\bar{A}$, we know that, if $\min_{j<n} |\lambda_j(\bar{A}) - \lambda_{\max}(\bar{A})| \teL \delta > 0$, then 
\begin{align}
    \| \bfw^{\max} - \theta \bar{\bfw}^{\max} \| \le \frac{2^{3/2} \| A - \bar{A}\| }{\delta},
\end{align}
for some $\theta \in \{1,-1\}$, where $\bar{\bfw}^{\max}$ is the unit eigenvector of $\bar{A}$ corresponding to its eigenvalue $\lambda_{\max}(\bar{A})$.

If the random graph is $\tsbm(\bfn,\bm{\ell})$, it can be shown that $\EE\{A\} + \diag(\ell_{11}I_{n_1},\ell_{22}I_{n_2})$ has two nonzero eigenvalues:
\begin{align*}
    \bar{\lambda}_{\max} &\Let \frac12 \Big[(\ell_{11}n_1 + \ell_{22}n_2) \\
    &\quad~ + \sqrt{(\ell_{11}n_1 - \ell_{22}n_2)^2 + 4n_1n_2 \ell_{12}^2} \Big],\\
    \bar{\lambda}_{-} &\Let \frac12 \Big[(\ell_{11}n_1 + \ell_{22}n_2) \\
    &\quad~ - \sqrt{(\ell_{11}n_1 - \ell_{22}n_2)^2 + 4n_1n_2 \ell_{12}^2} \Big].
\end{align*}
When the SBM is symmetric, $\bar{\lambda}_{\max} = (\ell_{\tx{s}} + \ell_{\tx{d}})n/2$ and $\bar{\lambda}_- = (\ell_{\tx{s}} - \ell_{\tx{d}})n/2$. 
Note that $\max\{\ell_{11},\ell_{22}\} = o(\min\{\bar{\lambda}_{\max},\bar{\lambda}_-)\})$, so the eigenvalues of $\EE\{A\}$ is close to those of $\EE\{A\} + \diag(\ell_{11}I_{n_1},\ell_{22}I_{n_2})$ when $n$ is large enough. We summarize the findings in the following lemma.
\begin{lem}\label{lem_eigenvect}
    Suppose that Assumption~\ref{asmp_connect} holds. Then 
    \begin{align*}
        \| \bfw^{\max} - \theta \bar{\bfw}^{\max} \| = \Theta\Big(\frac{\sqrt{(\ell_{\tx{s}}+\ell_{\tx{d}})n\log n}}{(\ell_{\tx{d}} + \min\{\ell_{\tx{s}},\ell_{\tx{d}}\})n} \Big),
    \end{align*}
    $\exists \theta\in\{1,-1\}$, holds for $\tssbm(n,\bm{\ell})$. For $\tsbm(\bfn,\bm{\ell})$, the bound is $\Theta(\sqrt{\Delta \log n}/(\bar{\lambda}_{\max} - \max\{\bar{\lambda}_-,0\}))$. \hfill\QED
\end{lem}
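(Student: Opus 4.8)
\textbf{Proof proposal for Lemma~\ref{lem_eigenvect}.}
The plan is to combine the perturbation bound~(4.19) of~\cite{vershynin2018high} (already invoked above) with an explicit eigenvalue gap computation for $\EE\{A\}$. The key point is that the bound $\| \bfw^{\max} - \theta \bar{\bfw}^{\max} \| \le 2^{3/2} \|A - \bar{A}\|/\delta$ reduces everything to two quantities: the concentration radius $\|A-\bar{A}\| = \Theta(\sqrt{\Delta\log n})$ from the Chung--Radcliffe lemma, and the spectral gap $\delta = \min_{j<n}|\lambda_j(\bar{A}) - \lambda_{\max}(\bar{A})|$ of the expected adjacency matrix. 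So the proof has two halves: (a) identify the top eigenvalue and the gap of $\bar{A}$, and (b) substitute and simplify.

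First I would handle the gap. As noted in the text, $\bar{A} = \EE\{A\}$ differs from the rank-two matrix $M \Let \EE\{A\} + \diag(\ell_{11}I_{n_1}, \ell_{22}I_{n_2})$ only by a diagonal correction of operator norm $\max\{\ell_{11},\ell_{22}\}$, which is $o(\min\{\bar\lambda_{\max},|\bar\lambda_-|\})$ under Assumption~\ref{asmp_connect}; by Weyl's inequality the nonzero eigenvalues of $\bar A$ are $\bar\lambda_{\max}(1+o(1))$ and $\bar\lambda_-(1+o(1))$, while its other $n-2$ eigenvalues are $O(\max\{\ell_{11},\ell_{22}\}) = o(\bar\lambda_{\max})$. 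Hence $\delta = \Theta(\bar\lambda_{\max} - \max\{\bar\lambda_-,0\})$: when $\bar\lambda_- \ge 0$ the nearest competing eigenvalue is $\bar\lambda_-$, and when $\bar\lambda_- < 0$ it is the cluster near $0$, so in both cases $\delta = \Theta(\bar\lambda_{\max} - \max\{\bar\lambda_-,0\})$. This already gives the stated $\tsbm(\bfn,\bm\ell)$ bound $\Theta(\sqrt{\Delta\log n}/(\bar\lambda_{\max} - \max\{\bar\lambda_-,0\}))$ directly from~(4.19).

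For the symmetric case I would specialize: $\bar\lambda_{\max} = (\ell_{\tx{s}}+\ell_{\tx{d}})n/2$ and $\bar\lambda_- = (\ell_{\tx{s}}-\ell_{\tx{d}})n/2$, so $\bar\lambda_{\max} - \max\{\bar\lambda_-,0\}$ equals $\ell_{\tx{d}} n$ if $\ell_{\tx{s}}\ge\ell_{\tx{d}}$ and $(\ell_{\tx{s}}+\ell_{\tx{d}})n/2 = \Theta(\ell_{\tx{d}} n)$ if $\ell_{\tx{s}}<\ell_{\tx{d}}$; either way this is $\Theta\big((\ell_{\tx{d}}+\min\{\ell_{\tx{s}},\ell_{\tx{d}}\})n\big)$, matching the denominator in the claim. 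Since $\Delta = \Theta((\ell_{\tx{s}}+\ell_{\tx{d}})n)$ for the SSBM, the numerator $\sqrt{\Delta\log n}$ becomes $\Theta(\sqrt{(\ell_{\tx{s}}+\ell_{\tx{d}})n\log n})$, and the lemma follows. I would also verify that the hypothesis $\delta>0$ of~(4.19), i.e.\ that $\lambda_{\max}(\bar A)$ is simple and strictly separated, holds: this is exactly the statement that $\bar\lambda_{\max} > \max\{\bar\lambda_-, 0\}$ up to lower-order terms, which is immediate since $\bar\lambda_{\max} - \max\{\bar\lambda_-,0\} = \Theta((\ell_{\tx{d}}+\min\{\ell_{\tx{s}},\ell_{\tx{d}}\})n) \to \infty$, and also need $\Delta = \omega(\log n)$ for the Chung--Radcliffe bound, which is Assumption~\ref{asmp_connect}.

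The main obstacle is the bookkeeping around the near-zero eigenvalue cluster of $\bar A$ in the disassortative regime $\ell_{\tx{d}} > \ell_{\tx{s}}$: there $\bar\lambda_- < 0$, so one must argue that the relevant gap for the \emph{top} eigenvector is to the bulk near $0$ rather than to $\bar\lambda_-$, and that the $o(\cdot)$ corrections from the diagonal shift (which are of size $\max\{\ell_{\tx{s}},\ell_{\tx{d}}\}$, not obviously negligible against a gap that could itself be as small as $\Theta(\ell_{\tx{d}} n)$) are genuinely lower order — this uses $\ell_{\tx{s}},\ell_{\tx{d}} = \omega(\log n/n)$ together with $n\to\infty$. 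Everything else is a routine substitution once the gap is pinned down.
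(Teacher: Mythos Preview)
Your proposal is correct and follows essentially the same route as the paper: the paper's ``proof'' is just the discussion preceding the lemma (Chung--Radcliffe for $\|A-\bar A\| = O(\sqrt{\Delta\log n})$, Davis--Kahan (4.19) for the eigenvector perturbation, and the explicit rank-two eigenvalue computation plus Weyl for the diagonal correction), which it then summarizes without further argument. Your treatment of the gap in the disassortative case and the verification that $\bar\lambda_{\max}-\max\{\bar\lambda_-,0\} = \Theta\big((\ell_{\tx d}+\min\{\ell_{\tx s},\ell_{\tx d}\})n\big)$ is in fact more careful than what the paper writes out.
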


\textbf{Step 3.} Under Assumption~\ref{asmp_connect}, $\mtcg$ is connected w.h.p., and $u_1 \sim d/(\alpha + \lambda_{\max}(\bar{A}) / \Delta) = \bar{u}_1$.  Under the assumption of~(i.a), it can be shown that $\bar{\lambda}_{\max} = \Theta(\Delta) = \omega(\log n)$. So the condition of Lemma~\ref{lem_eigenvect} holds. It suffices to study the structure of $\bar{\bfw}^{\max}$. Note that $\bar{\bfw}^{\max}$ has the structure $[w_1 \bfl_{n_1}^\tp~w_2 \bfl_{n_2}^\tp]^\tp$. From $\bar{A} \bar{\bfw}^{\max} = \bar{\lambda}_{\max} \bar{\bfw}^{\max}$, we have that
\begin{align*}
    \ell_{11}n_1 w_1 + \ell_{12}n_2 w_2 &= \bar{\lambda}_{\max} w_1,\\
    \ell_{21}n_1 w_1 + \ell_{22}n_2 w_2 &= \bar{\lambda}_{\max} w_2.
\end{align*}
Dividing the first equation by the second, it follows that 
\begin{align*}
    w_1 \sqrt{\frac{\ell_{22}}{\ell_{11}}} = \Theta(w_2).
\end{align*}
From $\|\bar{\bfw}^{\max}\| = 1$, $w_1 = \Theta(1/\sqrt{n_1}) = \Theta(1/\sqrt{n})$ and $w_2 = \Theta(1/\sqrt{n_2}) = \omega(1/\sqrt{n})$. Note that the entry-wise concentration error is of order $O(1/\sqrt{n\Delta^{c_0}})$ with $c_0 \in (0,1)$ from Lemma~\ref{lem_eigenvect}. So $\bfx^*$ consists of two clusters w.h.p. and Algorithm~\ref{alg_1} can achieve almost exact recovery.

\subsubsection{Proof of (i.b)} 
For $\tssbm(n,\bm{\ell})$, the expected adjacency matrix has a block structure
\begin{align*}
    \EE\{A\} \sim \begin{bmatrix*}
        \ell_{\tx{s}}\bfl_{n/2,n/2} & \ell_{\tx{d}}\bfl_{n/2,n/2} \\ 
        \ell_{\tx{d}}\bfl_{n/2,n/2} & \ell_{\tx{s}}\bfl_{n/2,n/2}
    \end{bmatrix*},
\end{align*}
so $\bar{\bfw}^{\max} \sim \bfl_n /\sqrt{n}$. Combining Lemmas~\ref{lem_LS_reduction} and~\ref{lem_eigenvect} yields that 
\begin{align*}
    \bfx^* &= c(u) \bfw^{\max}  \\
    &= c(u) \theta \bar{\bfw}^{\max} + c(u) (\bfw^{\max} - \theta \bar{\bfw}^{\max}) \\
    &= c(u) \theta \bar{\bfw}^{\max} + o(c(u)),
\end{align*}
which proves the first statement of (i.b). 

Now suppose that Algorithm~\ref{alg_1} can achieve almost exact recovery. Then there must exists $\chi^*_1$ and $\chi^*_2$ such that $\chi_1^*,\chi_2^* = \Theta(1/\sqrt{n})$, $|\bfx^*_i - \chi^*_1| < \eps$ for $i\in \mtcv_1$, and $|\bfx^*_j - \chi^*_2| < \eps$ for $j\in \mtcv_2$ for some $\eps>0$ with $3\eps < |\chi^*_1-\chi^*_2|$, except for $o(n)$ agents. Assuming without loss of generality that $\chi^*_1 >\chi^*_2>0$ and letting $\theta=1$, we have that
\begin{align*}
    \bar{\bfw}^{\max} = \frac{\bfx^*}{c(u)} + o(1).
\end{align*}
Let $\tilde{w} \Let  \bar{A} \bar{\bfw}^{\max}$, and we have
\begin{align*}
    \tilde{w}_i \le \frac{n}{2} [\ell_{\tx{s}} \chi_1^* + \ell_{\tx{d}} \chi_2^* + \eps (\ell_{\tx{s}} + \ell_{\tx{d}}) + o(\chi_k^*)(\ell_{\tx{s}} + \ell_{\tx{d}})], i\in \mtcv_1,\\
    \tilde{w}_j \ge \frac{n}{2} [\ell_{\tx{d}} \chi_1^* + \ell_{\tx{s}} \chi_2^* - \eps (\ell_{\tx{s}} + \ell_{\tx{d}}) + o(\chi_k^*)(\ell_{\tx{s}} + \ell_{\tx{d}})], j\in \mtcv_2,
\end{align*}
except for $o(n)$ agents. Then for $i\in \mtcv_1$ and $j\in\mtcv_2$
\begin{align*}
    &\tilde{w}_j - \tilde{w}_i \\
    &\ge \frac{n}{2} [(\ell_{\tx{d}} - \ell_{\tx{s}}) (\chi_1^* - \chi_2^*) - 2\eps(\ell_{\tx{s}} + \ell_{\tx{d}}) + o(\chi_k^*)(\ell_{\tx{s}} + \ell_{\tx{d}})] \\
    &\ge \frac{n}{2} \Big[\frac13 (\chi_1^*-\chi_2^* )(\ell_{\tx{d}} - 3\ell_{\tx{s}}) +o(\chi_k^*)(\ell_{\tx{s}} + \ell_{\tx{d}})  \Big] > 0,
\end{align*}
which contradicts that $\tilde{w}_j < \tilde{w}_i$ from the definition of eigenvectors. Hence the assertion is false and Algorithm~\ref{alg_1} cannot achieve almost exact recovery.

\subsubsection{Proof of (ii)}
It suffices to show that $\bfx^*$ form two clusters similar to (i.a). When $\gamma < 0$, $u_2 \sim d/(\alpha + \gamma \lambda_{\min}(\EE\{A\})) = \bar{u}_2$. For $\tssbm(n,\bm{\ell})$ with $\ell_{\tx{s}} < \ell_{\tx{d}}$, the smallest eigenvalue of $\bar{A}$ is $\bar{\lambda}_- = (\ell_{\tx{s}} - \ell_{\tx{d}})n/2$ and the corresponding eigenvector is $\bfw^{\min} = [\bfl_{n/2}^\tp/\sqrt{n},~-\bfl_{n/2}^\tp/\sqrt{n}]^\tp$. Again note that the concentration error is of order $O(1/\sqrt{n\Delta^{c_0}})$, so Algorithm~\ref{alg_1} can achieve almost exact recovery in this case.

\addtolength{\textheight}{-12cm}   


\bibliographystyle{ieeetr}
\bibliography{bibliography}

\end{document}